\theoremstyle{plain}
\newtheorem{theorem}{Theorem}[section]
\newtheorem{proposition}[theorem]{Proposition}
\newtheorem{property}[theorem]{Property}
\theoremstyle{definition}
\newtheorem{definition}{Definition}[section]
\theoremstyle{remark}
\newtheorem*{remark}{Remark}
\author{Maxime Crochemore$^{1,3}$,  Alessio Langiu$^1$ and Filippo Mignosi$^2$ 
\\
\\
$^1$ King's College London, London, UK \\
\textit{\{Maxime.Crochemore,Alessio.Langiu\}@kcl.ac.uk} \\
$^2$ University of L'Aquila, L'Aquila, Italy \\
\textit{Filippo.Mignosi@di.univaq.it} \\
$^3$ Universit\'e Paris-Est, France
}
\title{\Large{\textbf{The Rightmost Equal-Cost Position Problem}}}
\def\@maketitle{%
  \newpage
  \null
  \vskip 2em%
  \begin{center}%
  \let \footnote \thanks
    {\LARGE \@title \par}%
    \vskip 1.5em%
    {\large
      \lineskip .5em%
      \begin{tabular}[t]{c}%
        \@author
      \end{tabular}\par}%
  \end{center}%
  \par
  \vskip 1.5em}
\begin{document}

\maketitle

\begin{abstract}
LZ77-based compression schemes compress the input text by replacing factors in the text with an encoded reference to a previous occurrence formed by the couple (length, offset). For a given factor, the smallest is the offset, the smallest is the resulting compression ratio. This is optimally achieved by using the rightmost occurrence of a factor in the previous text. 
Given a cost function, for instance the minimum number of bits used to represent an integer, we define the Rightmost Equal-Cost Position $($REP$)$ problem as the problem of finding one of the occurrences of a factor which cost is equal to the cost of the rightmost one. We present the Multi-Layer Suffix Tree data structure that, for a text of length $n$, at any time $i$, it provides REP$($LPF$)$ in constant time, where LPF is the longest previous factor, i.e. the greedy phrase, a reference to the list of REP$(\{$set of prefixes of LPF$\})$ in constant time and REP$(p)$ in time $O(|p| \log \log n)$ for any given pattern $p$. 
\end{abstract}

\section*{Introduction}\label{C:intro}

The Rightmost Equal-Cost Position (REP) is an occurrence, pointed out by its offset, w.r.t. the current position, which reference uses a number of bits that is equal to the number of bits of the offset of the rightmost occurrence of a factor (substring) in the previous text.
This problem mainly comes from the data compression field and particularly from the LZ77 based schemes, where the length and offset pair are used to encode a dictionary phrase.

The foundational Ziv and Lempel LZ77 algorithm \cite{lz77} is the basis of almost all the famous dictionary compressors, like gZip, PkZip, WinZip and 7Zip.
They consider a portion of the previous text as the dictionary, i.e. they use a dynamic dictionary formed by the set of all the factors of the text up to the current position within a sliding window of fixed size.
A dictionary phrase refers to an occurrence of such phrase in the text by using the couple (length, offset), where the offset is the backward offset w.r.t. the current position. Since a phrase is usually repeated more than once along the text and since pointers with smaller offset have usually a smaller representation, the occurrence close to the current position is preferred.

Usually the length and offset pair is encoded by using a variable length code such as Huffman code or Elias Delta code (see for instance Deflate, gZip, PkZip, LZRW4, LZMA and 7zip descriptions in \cite{books/Salomon:2007:DC}).
Assuming that the lengths of encoded values is a monotonic non decreasing function, it is straightforward that the smaller is the value, the smaller is its encoded cost, i.e. the length in bits of the encoded value.
We define a Rightmost Equal-Cost Position (REP) of $w$ as the one of the occurrences of the factor $w$ which cost is the same as the rightmost one, according to a fixed cost function. The REP values can be optimally used to represent a dictionary phrase instead of the rightmost occurrence.
The cost function can be, for instance, the number of bits of the binary representation, i.e. the smaller integer greater than the $\log_2 (x)$, where $x$ is the value we have to encode. Later on in this paper we show how an example using the bit length of the Elias Gamma code as cost function.
Furthermore, in LZ77 based compression, the greedy approach is used to parse the text into phrases, i.e, in an iterative way, the longest match between the dictionary and the forwarding text is chosen.
This is commonly called the greedy phrase.
Some LZ77-based algorithms as Deflate algorithm and the compressors based on them, like gZip and PkZip, use variants of the greedy approach to parse the text. Deflate64 algorithm implemented in WinZip and 7zip, contains some heuristics to parse differently the text in order to improve the compression ratio, but its time complexity was never clearly stated.

The research about parsing optimality for dictionary compression produced in the last decades some noticeable results (see for instance
\cite{DBLP:conf/dcc/CohnK96,DBLP:conf/dcc/Horspool95,cancan,Lempel:2006:AOP:2263298.2268473,DBLP:conf/soda/MatiasS99,DBLP:journals/jacm/StorerS82}).
The greedy parsing is optimal for LZ77 dynamic dictionary schemes in the case of fixed length encoding of length and offset pair values as showed in \cite{DBLP:conf/dcc/CohnK96,DBLP:journals/jacm/StorerS82}, but it is not optimal, in terms of compression performance, when variable length code (VLC) are used to encode the pointers to dictionary phrases.
Recently, some optimal parsing solutions have been presented for dynamic dictionary and VLC, see
\cite{cglmr_iwoca10,cglmr_JDA2011,FerraginaSODA09}.
In \cite{cglmr_JDA2011} presented a solution for LZ77 dictionary and VLC, called Dictionary-Symbolwise Flexible Parsing.
It uses a graph-based model for the parsing problem where each node represent a position in the text and edges represent dictionary phrases. Edges are weighted according to the bit length of the encoded length and offset pair. For details about this algorithm we refer to the original paper \cite{cglmr_JDA2011}.
Unfortunately, this latter paper lacks of a practical solution for the problem of finding the smallest offset (the rightmost position) for a dictionary phrase. Even more, optimal parsing algorithms for the VLC case need to know, for any position in the text, the length and offset pairs of all the prefixes of the Longest Previous Factor (LPF) or, at least, those prefixes which occurrence have a different offset.

\medskip

The main goal of this work is to state the problem about the Rightmost Equal-cost Position (REP) and to present the Multilayer Suffix Tree as an efficient solution.
In Section~\ref{chapter:def} we show how to reduce the problem of finding the rightmost position of a word $w$ to the weaker REP$(w)$ problem.
In Section~\ref{chapter:mlst} we present the new Multilayer Suffix Tree full-text index data structure that uses $O(n \log n)$ amortized building time and linear space.
In Section~\ref{chapter:REP(p)} we show how to solve the REP problem for any pattern $p$ in $O(|p| \log \log n)$ time. 
Finally, in Section~\ref{chapter:REP(LPF)} we show how to
use this new data structure, at any time $i$, to retrieve REP$($LPF$)$ and a pointer to the list of REP$($SPF$)$ in constant time,
where LPF is the longest previous factor starting at position $i$ and SPF is the set of prefixes of LPF, and, furthermore, we show some experimental results.

\section{Definition of the Problem}\label{chapter:def}

Let $Pos(w) \subset \mathbb{N}$ the set of all the occurrences of $w \in \mbox{\emph{Fact}}(T)$ in the text $T\in \Sigma^*$, where $\mbox{\emph{Fact}}(T)$ is the set of the factors of $T$. 
Let $\emph{Offset} (w) \subset \mathbb{N}$ be the set of all the occurrence offsets of  $w \in \mbox{\emph{Fact}}(T)$  in the text $T$, i.e. $x\in \emph{Offset}(w)$ \emph{iff} $x$ is the distance between the position of an occurrence of $w$ and the end of the text $T$. 
For instance, given the text $T=babcabbababb$ of length $|T|=12$ and the factor $w=abb$ of length $|w|=3$, the set of positions of $w$ over $T$ is  \emph{Pos}$(w)=\{4, 9\}$. 
The set of the offsets of $w$ over $T$ is $\emph{Offset} (w)=\{7, 2\}$. 
Notice that $x\in \emph{Offset} (w)$ \emph{iff} exists $y\in  \mbox{\emph{Pos}}(w)$ such that $x = |T| - y - 1$.
Since the offsets are function of positions, there is a bijection between $\mbox{\emph{Pos}}(w)$ and $\emph{Offset}(w)$, for any factor $w$.

Given a number encoding method,  let $\mbox{\emph{Bitlen}}: \mathbb{N} \rightarrow \mathbb{N}$ a function that associates to a number $x$  the length in bit of the encoding of $x$. 
Let us consider the equivalence relation \emph{having equal codeword bit-length} on the set $\emph{Offset} (w)$.  The numbers $x,y\in \emph{Offset} (w)$ are bit-length equivalent \emph{iff} $\mbox{\emph{Bitlen}}(x)=\mbox{\emph{Bitlen}}(y)$. 
Let us notice that the \emph{having equal codeword bit-length} relation induces a partition on $\emph{Offset} (w)$.

\begin{definition}
Given a word $w \in \Sigma^*$, a text $T=a_1 \dots a_n \in \Sigma^*$, the \emph{rightmost} occurrence of $w$ over $T$ is the occurrence of $w$ that appears closest to the end of the text, if $w$ appears at least once in the text $T$, otherwise it is not defined. More formally,
  
$$\mbox{\emph{rightmost}}(w) =
\left\{
	\begin{array}{ll}
		\min\{x\ |\ x\in \emph{Offset} (w)\}  & \mbox{if } \emph{Offset} (w)\neq \emptyset \\
		\mbox{not defined} & \mbox{if } \emph{Offset} (w)= \emptyset
	\end{array}
\right.$$ 

\end{definition}

Let us notice that referring to the rightmost occurrence of a word in a dynamic setting, where the input text is processed left to right, corresponds to referring to the rightmost occurrence over the text \emph{already processed}. 
Indeed, if at a certain algorithm step we have processed the first $i$ symbols of the text, the rightmost occurrence of $w$ is the occurrence of $w$ closest to the position $i$ of the text. 

\begin{definition} Let $\mbox{\emph{rightmost}}_i(w)$ be the rightmost occurrence of $w$ over $T_i$, where $T_i$ is the prefix of the text $T$ ending at the position $i$ in $T$. Obviously, $\mbox{\emph{rightmost}}_n(w)=\mbox{\emph{rightmost}}(w)$ for $|T|=n$.
\end{definition}

In many practical data compression algorithms the text we are able to refer to is just a portion of the whole text. Let $T[j:i]$ be the factor of the text $T$ starting from the position $j$ and ending to the position $i$. We generalize the definition of $\mbox{\emph{rightmost}}(w)$ over a factor $T[j:i]$ of $T$ as follows.

\begin{definition} Let $\mbox{\emph{rightmost}}_{j,i}(w)$ be the rightmost occurrence of $w$ over $T[j:i]$, where $T[j:i]$ is the factor of the text $T$ starting at the position $j$ and ending at the position $i$ of length $i-j+1$. Obviously, $\mbox{\emph{rightmost}}_{1,n}(w)=\mbox{\emph{rightmost}}(w)$ for $|T|=n$.
\end{definition}

\begin{definition} \textbf{The Rightmost Equal-Cost Position (REP).}
Let us suppose that we have a text $T\in \Sigma^*$, a pattern $w\in \Sigma^*$ and a point $i$ in time. 
If $w$ appears at least once in $T_i$, then a \emph{rightmost equal-cost position} of $w$, REP$(w)$, over $T_i$ is a position $j$ in the text which offset is in $[\mbox{\emph{rightmost}}_i(w)]$, where $[\mbox{\emph{rightmost}}_i(w)]$ is the equivalence class induced by the relation \emph{having equal codeword bit-length} containing the element $\mbox{\emph{rightmost}}_i(w)$. 
Otherwise, when $w$ does not appear in $T_i$, the REP$(w)=0$.
\end{definition}

\section{The Multilayer Suffix Tree}\label{chapter:mlst}

The main idea of this new data structure, is based on two observations. 
The first one is that the equivalence relation \emph{having equal codeword bit-length} induces a partition on $\emph{Offset}(w)$, for any $w$, and also induces a partition on the set of all the possible offsets over a text $T$ independently from any specific factor, i.e. on the set $[1..|T|]$. 
The second observation is that for any encoding method for which the \emph{Bitlen} function is a monotonic function, each equivalence class in $[1..|T|]$ is composed by contiguous points in $[1..|T|]$. Indeed, given a point $p \in [1..|T|]$, the equivalence class $[p]$ is equal to the set $[j..i]$, with $j\leq p\leq i$, $j=\min\{x\in [p]\}$ and $i=\max\{x\in [p]\}$.

Assuming the \emph{Bitlen} function as the cost function of the dictionary phrases, we exploit the discreteness of \emph{Bitlen} in order to solve the REP problem by using a set of suffix trees for sliding window, each one devoted to one or more classes of equivalence of the relation \emph{having equal code bit-length}. 

\medskip

Fixed an encoding method for numbers and a text $T$, we assume that \emph{Bitlen} is a monotonic function and that we know the set $B=\{b_1, b_2, \ldots, b_s\}$, with $b_1<b_2<\cdots <b_s$, that is the set of the \emph{Bitlen} values of all the possible offsets over $T$.
We define the set $SW=\{sw_1, sw_2, \ldots , sw_s\}$, with $sw_1<sw_2<\cdots <sw_s$, where $sw_i$ is the greatest integer (smaller than or equal to the length of the text $T$) such that \emph{Bitlen}$(j)$ is less that or equal to $b_i$. More formally, $sw_i=\max\{j \leq |T| \ |\ \mbox{\emph{Bitlen}}(j)\leq b_i\}$. Notice that $sw_s=|T|$.

All the standard non-unary representation of numbers satisfy the following property.

\begin{property}\label{propertyexp}
There exists a constant $k>1$ and an integer $\hat{k}$ such that for any $\hat{k} \leq i < s$ one has $sw_i\geq k\, sw_{i-1}$.
\end{property}

The Huffman code do not strictly satisfy above property, but it easy to find a function $f(w)$ always greater than \emph{Bitlen}$($Huffman code of $w)$ that is a good approximation of it and it does have the Property \ref{propertyexp} or, alternatively, you can impose that property rearranging the Huffman trees.

\medskip

Let us consider an example where integers are encoded by using the  Elias $\gamma$ codes. 
The Table \ref{table:gamma} reports the Elias $\gamma$ codes and the \emph{Bitlen} values for integers from $1$ to $18$. 
Suppose, for instance, that our cost function is associated to this Elias codes and that  we have a text 
$T$ of length $18$. The set $B$ therefore is $B = \{b_1=1, b_2=3, b_3=5, b_4=7, b_5=9\}$ and 
we have that $sw_1=1$, $sw_2=3$, $sw_3=7$, $sw_4=15$ and $sw_5=18$. Notice that, indeed, 
Property \ref{propertyexp} is satisfied for $\hat{k}=2$ and  $k=2$.

\begin{table}[b!]
\begin{center}
\begin{tabular}{| c | c | c |}
\hline
$i$  &$\gamma(i)$ \qquad & \emph{Bitlen}$(\gamma(i))$  \cr
\hline
 1 & 1 & 1 \cr
 2 & 010 & 3 \cr
 3 & 011 & 3 \cr
  4 & 00100 & 5 \cr
  5 & 00101 & 5 \cr
  6 & 00110 & 5 \cr
  7 & 00111 & 5 \cr
  8 & 0001000 & 7 \cr
  9 & 0001001 & 7 \cr
  10 & 0001010 & 7 \cr
  11 & 0001011 & 7 \cr
  12 & 0001100 & 7 \cr
  13 & 0001101 & 7 \cr
  14 & 0001110 & 7 \cr
  15 & 0001111 & 7 \cr
  16 & 000010000 & 9 \cr
  17 & 000010001 & 9 \cr
  18 &  000010010  & 9 \cr
\hline
\end{tabular}
\end{center}
\vspace{0.20cm}\caption{Elias $\gamma$ code for integers from $1$ to $18$ and their \emph{Bitlen} value.}\label{table:gamma}
\end{table}

\medskip

Let us now introduce the Multilayer Suffix Tree data structure.
We suppose that a text $T$ is provided \textit{online} and at time $i$ the first $i$ characters of $T$ have been read, i.e. at time $i$ the prefix $T_i$ of length $i$ of the text $T$ has been read.

\begin{definition}
The Multilayer Suffix Tree is a data structure composed by the set $S=\{ S_{sw_{1}}, S_{sw_2}, \ldots , S_{sw_s} \}$ of suffix trees
where, for any $\alpha \in SW$ and at any moment $i$, 
$S_\alpha$ is the suffix tree for sliding window of $T_i$ with sliding window of size $\alpha$ such that $S_\alpha$ represents all the factors of $T[i - \alpha : i]$. We call $S_\alpha$ simply the layer $\alpha$ or the layer of size $\alpha$.
\end{definition}

From now on we will refer to suffix trees or layers indifferently.

We use the \textit{online} suffix tree for sliding window construction algorithm introduced by Larson in \cite{DBLP:conf/dcc/Larsson96} and later refined by Senft in \cite{SENFT05,10.1109/DCC.2006.11}, in order to build each layer of our multilayer suffix tree. 
Let us recall that in \cite{DBLP:conf/dcc/Larsson96} an \textit{online} and linear time construction for the suffix tree is reported. The suffix tree uses linear space w.r.t. the size of the sliding window.
Therefore, for any $S_\alpha \in S=\{ S_{sw_1}, S_{sw_2}, \ldots , S_{sw_s} \}$, $S_\alpha$ is a full-text index for $T[i - \alpha : i]$, where $T$ is a given text and $i$ is a point in time. $S_\alpha$ uses $O(\alpha)$ space. 

In order to decrease the practical space requirement, we think that it is possible to adapt our data structure to work with other classic indexes for sliding window  (see for instance \cite{inecdawg,Na:2003:TST:899776.899780,Senft:2008:SCP:1483948.1483960}) or to use the recently presented compressed versions of the suffix tree (e.g.
\cite{Grossi:2005:CSA:1093654.1096192,Sadakane:2007:CST:1326296.1326297}).

\begin{proposition}\label{pro:ricercabinaria}
$1$. If a pattern $w$ is in layer $\alpha$ with $\alpha \in SW$, then $w$ is also in layer $\beta$ for any $\beta \in SW$ with $\alpha \leq \beta$.
$2$. If a pattern $w$ is not in a layer $\alpha$, $\alpha \in SW$, then $w$ is not in layer $\beta$ with  $\beta\leq\alpha$
\end{proposition}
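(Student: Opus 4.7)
The plan is to derive both claims from the fact that the layers are built over a family of sliding windows that are nested by suffix inclusion, so that the property is essentially inherited from the elementary inclusion of factor sets.

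First I would fix a time $i$ and two sizes $\alpha, \beta \in SW$ with $\alpha \leq \beta$. By the definition of the layers, $S_\alpha$ indexes exactly the factors of the window $T[i-\alpha:i]$ and $S_\beta$ indexes the factors of $T[i-\beta:i]$. Since the two windows share the same right endpoint $i$ and $\alpha \leq \beta$, the smaller window $T[i-\alpha:i]$ is a suffix (in particular, a factor) of the larger window $T[i-\beta:i]$. Therefore $\mbox{\emph{Fact}}(T[i-\alpha:i]) \subseteq \mbox{\emph{Fact}}(T[i-\beta:i])$.

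Given this set inclusion, part $1$ follows immediately: if $w$ is represented in $S_\alpha$ then $w \in \mbox{\emph{Fact}}(T[i-\alpha:i]) \subseteq \mbox{\emph{Fact}}(T[i-\beta:i])$, and hence $w$ is represented in $S_\beta$. Part $2$ is just the contrapositive of part $1$: if $w$ is not in $S_\alpha$ then, for every $\beta \in SW$ with $\beta \leq \alpha$, the occurrence of $w$ in $S_\beta$ would contradict part $1$ applied to the pair $(\beta,\alpha)$.

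The only bookkeeping point to handle is the boundary case $i < \alpha$, where the window $T[i-\alpha:i]$ is interpreted as the whole prefix $T_i$ read so far; the nested-suffix relation between windows of different sizes still holds, so the argument is unchanged. There is really no substantial obstacle here: the proposition is a monotonicity statement meant to justify searching across the layers (as its label suggests) by binary search, and it reduces by definition unwinding to the trivial inclusion of factor sets of nested suffixes.
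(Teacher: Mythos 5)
Your proof is correct and follows essentially the same route as the paper: since both windows end at position $i$ and $\alpha\leq\beta$, the window $T[i-\alpha:i]$ is a suffix of $T[i-\beta:i]$, so the factor-set inclusion gives part $1$, and part $2$ is its contrapositive. The extra remark about the boundary case $i<\alpha$ is a harmless refinement not spelled out in the paper.
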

\begin{proof}
The proof of the property at point $1$ comes immediately from suffix trees properties. Since layer $\alpha$ is a full text index for $T[i - \alpha : i]$ and layer $beta$ is a full-text index for $T[i - \beta : i]$, for any $\alpha$, $\beta \in SW$ with $\alpha \leq \beta$ and for any $i$, $T[i - \alpha : i]$ is a suffix of $T[i - \beta : i]$. The property at point $2$ can be deduced by point $1$.
\end{proof}

\smallskip

\begin{proposition}
Fixed a text $T$ of size $|T|=n$, at any moment $i$ with $0\leq i \leq n$ and for any standard variable-length code, the multilayer suffix tree uses $O(i)$ space.
\end{proposition}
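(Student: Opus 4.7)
The plan is to bound the space of each layer individually and then sum, exploiting the geometric growth of the sequence $(sw_j)$ given by Property~\ref{propertyexp}.

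First I would recall that each layer $S_{sw_j}$, constructed with the sliding-window suffix-tree algorithm of Larsson and Senft, uses $O(\min(sw_j, i))$ space at time $i$: as long as $i < sw_j$ the window has not yet filled and the tree merely indexes $T_i$, while once $i \geq sw_j$ it occupies linear space in the window size $sw_j$. This observation reduces the whole statement to an arithmetic estimate of $\sum_{j=1}^{s} \min(sw_j, i)$.

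Next I would split the layers into \emph{small} ones with $sw_j \leq i$ and \emph{large} ones with $sw_j > i$, and set $s^{*} = \max\{j : sw_j \leq i\}$. The small layers contribute $\sum_{j=1}^{s^{*}} O(sw_j)$; by Property~\ref{propertyexp}, from index $\hat{k}$ onwards this sequence grows by a factor at least $k > 1$ at each step, so the partial sum is dominated by its largest term, giving $\sum_{j=\hat{k}}^{s^{*}} sw_j = O(sw_{s^{*}}) = O(i)$, while the initial constantly-many terms add only $O(1)$. The large layers all index the same prefix $T_i$ (a sliding window of size larger than the already processed text simply subsumes that text), so they coincide and can be represented by a single shared structure of size $O(i)$.

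Adding the two contributions gives the claimed $O(i)$ bound. The main obstacle lies in handling the large layers: read literally, the definition would force us to count each of them separately and produce a naive $\Theta(i \cdot s)$ bound. One must therefore justify either a sharing argument or a lazy-instantiation policy (materializing $S_{sw_j}$ only at time $i = sw_j$), so that all layers whose sliding window has not yet filled share a single copy of the suffix tree of $T_i$; once this is established, the geometric-decay estimate on the remaining layers is straightforward.
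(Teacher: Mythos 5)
Your proof is correct, and its core is the same as the paper's: the geometric growth of $(sw_j)$ guaranteed by Property~\ref{propertyexp} makes the sum of the layer sizes dominated by the largest relevant term, which is $O(i)$. Where you diverge is in how the layers with $sw_j > i$ are handled. The paper dispatches them in one line: it observes that at time $i$ all offsets are $O(i)$, so "the maximum value of the set $SW$ is $O(i)$" --- i.e.\ it implicitly reads $SW$ as recomputed with respect to the current prefix $T_i$ (consistent with its own example, where at time $i$ the set is $\{1,2,4,\dots,j,i\}$), so that no layer larger than $O(i)$ exists yet and the bound $O(\sum_{\alpha\in SW}\alpha)=O(i)$ follows directly from Property~\ref{propertyexp}. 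You instead keep $SW$ fixed relative to the full text of length $n$, bound each layer by $O(\min(sw_j,i))$, and then must argue that the unfilled layers, all of which coincide with the suffix tree of $T_i$, are shared or lazily instantiated; your remark that a literal, fully materialized reading would give $\Theta(i\cdot s)$ (more precisely $O(i(\log n-\log i))$) rather than $O(i)$ is a fair and real observation that the paper never addresses explicitly. So your route is slightly longer but more honest about the online/lazy assumption on which the stated bound actually rests; the paper's route is shorter because it builds that assumption into what $SW$ (and hence the structure) means at time $i$.
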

\begin{proof}
Since at time $i$ the maximum offset of all the occurrences in the text $T_i$ is $O(i)$, for any standard variable-length code the maximum value of the set $SW$ is $O(i)$. 
Since Property \ref{propertyexp}  holds for the set $SW$ and since the multilayer suffix tree space is equal to the sum of the space of its layers, as an immediate consequence we have that space used by the multilayer suffix tree is $O(\sum\limits_{\alpha \in SW} \alpha) = O(i)$.
\end{proof}

For instance, at time $i$, if we consider the usual binary representation of numbers, the  values $\alpha \in SW$ turn out to be powers of $2$ from $1$ to $j$, where $j$ is the greatest power of $2$ smaller than $i$, plus $i$, i.e. $SW=\{1,2,4,\dots j,i\}$ and $\sum\limits_{\alpha \in SW} \alpha < 2j+i < 3i$ and $O(\sum\limits_{\alpha \in SW} \alpha) = O(i)$. 

From the linear time of the \textit{online} construction of the suffix tree for sliding window and since the number of layers is $|SW|=O(\log |T|)$,  we can immediately state the following proposition.

\begin{proposition}
Given a text $T$ of length $|T|=n$, for any standard variable-length code, it is possible to build \textit{online} the multilayer suffix tree in $O(n \log n)$
amortized time.
\end{proposition}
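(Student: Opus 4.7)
The plan is to combine two observations already present in the excerpt: each layer can be maintained in linear amortized total time by Larsson's online sliding-window suffix tree construction, and the number of layers is logarithmic in $n$ by Property~\ref{propertyexp}. Multiplying these two bounds yields $O(n \log n)$ amortized total time.

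First I would recall Larsson's result: the online construction of a suffix tree for a sliding window of size $\alpha$ processes a stream of $n$ characters in $O(n)$ total time, amortized $O(1)$ per character, and uses $O(\alpha)$ space. Crucially, this bound is independent of the window size $\alpha$, because each character triggers only a constant amortized number of pointer/edge updates (the standard amortized analysis for Ukkonen-style online construction, adapted to handle deletions at the trailing end of the window as in \cite{DBLP:conf/dcc/Larsson96,SENFT05,10.1109/DCC.2006.11}).

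Second I would bound $|SW|$. By Property~\ref{propertyexp}, there exist constants $k>1$ and $\hat{k}$ such that $sw_i \geq k\cdot sw_{i-1}$ for every $\hat{k}\leq i<s$. Together with $sw_s\leq n$ and $sw_{\hat{k}}\geq 1$, this yields $s-\hat{k}\leq \log_k n$, hence $|SW|=s = O(\log n)$ for any standard variable-length code.

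Finally, the multilayer suffix tree consists of exactly $|SW|$ independent layers $S_\alpha$, each built by Larsson's online algorithm on the same input stream. When the $i$-th character of $T$ arrives, every layer is updated, so the total work over the entire input is
\[
\sum_{\alpha\in SW} O(n) \;=\; O(n\cdot |SW|) \;=\; O(n\log n),
\]
which is the claimed amortized bound. The only subtlety worth spelling out is that the $O(n)$ figure for each layer is itself an amortized bound, so the aggregate is amortized as well; this is not an obstacle because the sum of amortized bounds over disjoint data structures is still a valid amortized bound on their joint cost. I do not foresee a hard step: the whole argument is essentially a product of the per-layer cost from cited prior work and the geometric-growth bound on $|SW|$ coming from Property~\ref{propertyexp}.
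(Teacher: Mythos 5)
Your proposal is correct and matches the paper's own (very brief) argument: linear amortized online construction per sliding-window layer via Larsson's algorithm, multiplied by the $O(\log n)$ bound on the number of layers, which the paper derives from the logarithmic codeword lengths of standard codes and which you equivalently obtain from Property~\ref{propertyexp}. Your added remark that summing amortized bounds over independent layers is legitimate is a fair elaboration, not a different route.
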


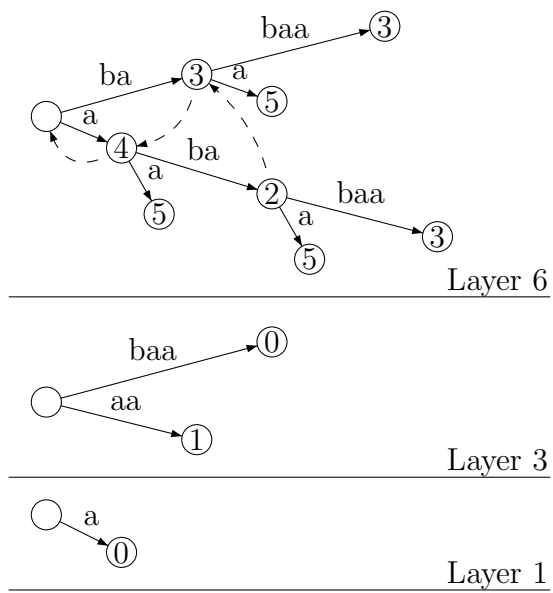
\begin{figure}[t!] 
\centering{
\begin{picture}(70,90)(0,-10)
\gasset{Nw=4,Nh=4}
   \node(1)(0,53){}
   \node(2)(10,48.8){$4$}
   \node(3)(30,42.7){$2$}
   \node(4)(52,37){$3$}
   \node(5)(20,58.6){$3$}
   \node(6)(45,65){$3$}
   \node(7)(15,40){$5$}
   \node(8)(35,34){$5$}
   \node(9)(30,55){$5$}
   \drawedge(1,2){a}
   \drawedge(2,3){ba}
   \drawedge(2,7){a}
   \drawedge(3,4){baa}
   \drawedge(3,8){a}
   \drawedge(1,5){ba}
   \drawedge(5,6){baa}
   \drawedge(5,9){a}
   \drawedge[AHnb=1,dash={1.5}{1.5},curvedepth=-2](3,5){}
   \drawedge[AHnb=1,dash={1.5}{1.5},curvedepth=3](5,2){}
   \drawedge[AHnb=1,dash={1.5}{1.5},curvedepth=4](2,1){}
   
   \node(11)(0,15){}
   \node(13)(20,10){$1$}
   \node(15)(30,23){$0$}
   \drawedge(11,13){aa}
   \drawedge(11,15){baa}

   \node(21)(0,0){}
   \node(22)(10,-5){$0$}
   \drawedge(21,22){a}
   
   \drawline[AHnb=0](-5,29)(67,29)
   \node[Nframe=n](L)(60,31){Layer $6$} 
   \drawline[AHnb=0](-5,5)(67,5)
   \node[Nframe=n](L)(60,7){Layer $3$} 
   \drawline[AHnb=0](-5,-10)(67,-10)
   \node[Nframe=n](L)(60,-8){Layer $1$} 
   
\end{picture}
\vspace{0.30cm}
\caption{
The Multilayer Suffix Tree for the text $T=\texttt{ababaa}$ and for the Elias $\gamma$-code, where $B=\{1,3,5\}$, $SW=\{1,3,6\}$.
The solid edges are regular ones and the dashed links are the suffix-links of internal nodes. For convenience, we added edge labels with the substring of $T$ associated to edges. The node value is the position over the text of the incoming edge. We omitted the string depth of nodes.
Let consider, for instance, the phrase $w=` ba"$ with 
\emph{Pos}$(w)=\{1,3\}$, \emph{Offset}$(w)=\{4,2\}$, 
\emph{rightmost}$(w)=2$ and $\gamma(2)=$\textit{010}.
Since $w$ is in layer $sw_2=3$ and is not  in layer $sw_1=1$, we have that \emph{Bitlen}$($\emph{rightmost}$(w))=3$ is equal to 
\emph{Bitlen}$(sw_2=3)=b_2=3$.
}
\label{fig:multigraph1}
}
\end{figure}

\section{Solving REP for a Generic Pattern}\label{chapter:REP(p)}

We want now to show how to answer to the REP$(p)$ for a given pattern $p$ of length $|p|$ 
and a text $T_i$ of length $i$ in $O(|p| \log \log i)$.

\begin{proposition}\label{pro:ricerca}
If a pattern $p$ is in layer $\beta$ and is not in layer $\alpha$, where $\alpha$ is the maximum of the values in $SW$ smaller than $\beta$, then
any occurrence of $p$ in the layer $\beta$ correctly solve the REP problem for the pattern $p$.
\end{proposition}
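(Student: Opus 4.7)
The plan is to exploit the precise relationship between the sliding-window sizes in $SW$ and the equivalence classes of the \emph{having equal codeword bit-length} relation. Recall that $sw_k$ was defined as $\max\{j \leq |T| : \mbox{\emph{Bitlen}}(j) \leq b_k\}$, and that $\mbox{\emph{Bitlen}}$ is monotonic. So the first step I would take is to observe that, setting $sw_0 = 0$, the intervals $(sw_{k-1}, sw_k]$ for $k=1,\dots,s$ are exactly the equivalence classes of the offsets in $[1..|T|]$ under the bit-length relation: every offset in $(sw_{k-1}, sw_k]$ has $\mbox{\emph{Bitlen}}$ equal to $b_k$, and offsets in different such intervals have different $\mbox{\emph{Bitlen}}$ values.

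Next, I would translate the assumptions of the proposition into statements about offsets. Since layer $\gamma$ is a full-text index of $T[i-\gamma:i]$, a pattern $p$ appears in layer $\gamma$ iff it has at least one occurrence whose offset lies in $[1,\gamma]$. Thus $p$ being in layer $\beta$ means some occurrence of $p$ has offset $\leq \beta$, and $p$ not being in layer $\alpha$ means no occurrence of $p$ has offset $\leq \alpha$. Combined, every occurrence of $p$ that is witnessed by layer $\beta$ (i.e. has offset in $[1,\beta]$) actually has offset in $(\alpha,\beta]$. Moreover, by Proposition~\ref{pro:ricercabinaria}, since $p$ is absent from layer $\alpha$, it is absent from every smaller layer as well, so the rightmost occurrence $\mbox{\emph{rightmost}}_i(p)$ itself must have offset in $(\alpha, \beta]$.

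Now the key step: because $\alpha$ is the immediate predecessor of $\beta$ in $SW$, we have $\alpha = sw_{k-1}$ and $\beta = sw_k$ for some $k$, so the interval $(\alpha,\beta]$ coincides with a single equivalence class under $\mbox{\emph{Bitlen}}$. Therefore any occurrence of $p$ in layer $\beta$ has an offset that belongs to the same equivalence class as $\mbox{\emph{rightmost}}_i(p)$, which is exactly the definition of a valid $\mbox{\emph{REP}}(p)$.

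I do not expect a real obstacle here; the statement is essentially a bookkeeping consequence of the construction of $SW$. The only subtle point to be careful about is the boundary: one must argue that \emph{both} $\mbox{\emph{rightmost}}_i(p)$ and the arbitrary occurrence returned by layer $\beta$ lie in the same half-open interval $(sw_{k-1}, sw_k]$, which is why the assumption ``$\alpha$ is the maximum of the values of $SW$ smaller than $\beta$'' is essential — dropping it would allow offsets to straddle a bit-length boundary and the argument would collapse.
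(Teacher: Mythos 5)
Your proposal is correct and follows essentially the same route as the paper: use Proposition~\ref{pro:ricercabinaria} to conclude that $\beta$ is the smallest layer containing $p$, hence $\mbox{\emph{rightmost}}_i(p)$ has offset in $(\alpha,\beta]$, and then note that this interval is exactly one equivalence class of the \emph{having equal codeword bit-length} relation, so any occurrence reported by layer $\beta$ is a valid REP$(p)$. You merely spell out in more detail the identification of the classes with the intervals $(sw_{k-1},sw_k]$, which the paper takes for granted from its earlier discussion of the monotonicity of \emph{Bitlen}.
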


\begin{proof}
If a pattern $p$ is in layer $\beta$ and is not in layer $\alpha$, where $\alpha$ is the maximum of the values in $SW$ smaller than $\beta$, then from Prop. \ref{pro:ricercabinaria}, it follows that $\beta$ is the smallest layer where it appears $p$. Therefore $p$ has at least one occurrence in $T[i - \beta : i-\alpha-1]$, i.e. $\mbox{\emph{rightmost}}_i(p)\in (\alpha .. \beta]$. 
Since $(\alpha .. \beta]$ is the equivalence class $[\beta]$ of the \emph{having equal codeword bit-length} relation, we have that any occurrence of $p$ in the layer $\beta$ correctly solve the REP problem for $p$.
\end{proof}

\begin{remark}
Let us notice that if $S_{sw_x}$ is the smallest layer where $\mbox{\emph{rightmost}}_i(w)$ appears, then the $\mbox{\emph{Bitlen}}$ value of the offset of $\mbox{\emph{rightmost}}_i(w)$ is equal to $b_x$. 
\end{remark}

Using above proposition, we are able to solve the problem REP$(p)$ once we find the smallest layer containing the rightmost occurrence of the pattern, if any, otherwise we just report $0$.

What follows is the trivial search of the smallest layer that contains an occurrence of a given pattern.

Given a pattern $p$ at time $i$, we look for $p$ in the layer $sw_1$, i.e. the smallest layer.  
If $p$ is in $S_{sw_1}$, then all the occurrences of $p$ in $T[i- sw_1 :i]$ belong to the class of the rightmost occurrence of $p$ over $T$. 
If $p$ is not in $S_{sw_1}$, then we look for any occurrence of $p$ is in $S_{sw_2}$, the next layer in increasing order. 
If $p$ is in $S_{sw_2}$, since it is not in $S_{sw_1}$, for the Prop. \ref{pro:ricerca}, any occurrence of $p$ in this layer belong to the rightmost occurrence of $p$ over $T_i$. 
Continuing in this way, as soon as we found an occurrence of $p$ in a layer, this occurrence correctly answer to REP$(p)$. By using this trivial approach we can find 
REP$(p)$ in time proportional to the length of the pattern times the number of layers.

Since many of the classic variable-length codes for integers, like the Elias's $\gamma$-codes, produce codewords of length proportional to the logarithm of the represented value, we can assume that the cardinality of $SW$ is $O(\log |T|)$.
Since that $|T_i|=i$, in the \emph{online} fashion, we have that the time for REP$(p)$ is $O(|p| \log i)$.

A similar result can be obtained by using a variant of the Amir et al. algorithm presented \cite{Amir_onlinetime}, but it does not support REP(LPF) operations in constant time and it does not support sliding window over the text.

\medskip

Since Prop. \ref{pro:ricercabinaria} holds for the layers of our data structure, we can use the binary search to find the smallest layer containing a given pattern. Since $|SW|=O(\log i)$, for any classic variable-length code,  the number of layers in our structure is $O(\log i)$ and the proof of following proposition comes straightforward. 

\begin{proposition}
Using any classic variable-length code, at any time $i$ the multilayer suffix tree is able to answer to REP$(pattern)$ for a given pattern in $O(|\mbox{\emph{pattern}}| \log\log i)$ time.
\end{proposition}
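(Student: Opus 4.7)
The plan is to combine the monotonicity property guaranteed by Proposition~\ref{pro:ricercabinaria} with a binary search on the layer indices, and then invoke Proposition~\ref{pro:ricerca} to read off a correct REP answer from the smallest witnessing layer.

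First I would observe that Proposition~\ref{pro:ricercabinaria} makes the predicate ``$p$ occurs in $S_{sw_x}$'' monotone in $x$: it is either false on an initial segment $\{1,\ldots,x^\ast-1\}$ and true on the complementary tail $\{x^\ast,\ldots,s\}$, or it is false everywhere (in which case $p$ does not occur in $T_i$ at all and we return $0$). This monotonicity is exactly what binary search requires. So I would binary search over the indices $\{1,2,\ldots,s\}$ to locate the smallest index $x^\ast$ with $p \in S_{sw_{x^\ast}}$, or detect that no such index exists by checking $S_{sw_s}$ once at the start.

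At each probe of the binary search, the algorithm performs an ordinary descent for $p$ in the suffix tree of a single layer; since each layer is a standard suffix tree for a sliding window, this pattern-matching step takes $O(|p|)$ time (over a constant-size alphabet, or with the usual $O(|p|\log\sigma)$ absorbed into the notation). Once $x^\ast$ is found, Proposition~\ref{pro:ricerca} guarantees that any occurrence of $p$ returned from the descent in $S_{sw_{x^\ast}}$ has offset in the equivalence class $[\mbox{\emph{rightmost}}_i(p)]$, so it is a valid REP$(p)$; the position itself is read off the node reached by the descent in constant extra time.

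It remains to bound the number of probes. For any classic variable-length integer code, Property~\ref{propertyexp} forces $sw_i$ to grow at least geometrically from some index $\hat k$ onwards, so $s = |SW| = O(\log i)$, as already observed just before the statement. A binary search therefore makes $O(\log \log i)$ probes, each costing $O(|p|)$, for a total of $O(|p|\log\log i)$. The only mildly delicate point is to be sure that the number of layers is indeed $O(\log i)$ for every ``classic'' code and not just for $\gamma$-codes; this is precisely what Property~\ref{propertyexp} was introduced to ensure, so there is no real obstacle, and the bound follows.
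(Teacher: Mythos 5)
Your proposal is correct and follows essentially the same route as the paper: the paper's own (very terse) argument is exactly a binary search over the $O(\log i)$ layers, justified by the monotonicity of Proposition~\ref{pro:ricercabinaria}, with Proposition~\ref{pro:ricerca} certifying that any occurrence found in the smallest containing layer answers REP$(p)$, and each probe costing $O(|p|)$ for a suffix-tree descent. Your write-up simply makes explicit the details (the $x^\ast=1$ and ``pattern absent'' cases, and that $|SW|=O(\log i)$ follows from Property~\ref{propertyexp}) that the paper leaves implicit.
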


\section{Solving REP(LPF) and REP(SPF)}\label{chapter:REP(LPF)}

As pointed out in the Senft's paper \cite{10.1109/DCC.2006.11}, the online suffix tree construction algorithms in \cite{Fiala:1989:DCF:63334.63341,DBLP:conf/dcc/Larsson96,UK95}) keep track of the longest repeated suffix of the text. 
We use those information, one from any layer of the multilayer suffix tree, to answer in constant time to the REP$($LPF$)$ problem and the  REP(SPF), where LPF is the longest previous factor and SPF is the set of prefixes of LPF.

Recall that, using Ukkonen's nomenclature, an Implicit Suffix Tree $I_k$, $1<k<n$, for the text $T=a_1\dots a_n$ is a Suffix Tree for the text $T[1..k]=T_k$. It is built starting from $I_{k-1}$ by implicitly extending all the suffixes of $T_{k-1}$ that are unique, i.e. those that are leaves, and explicitly adding the suffixes of $T_k$ in decreasing order down to the longest suffix of $T_k$ that is already in $I_{k-1}$, i.e. the longest suffix of $T_k$ that appears at least twice in $T_k$. In order to efficiently perform the explicit extensions, the construction algorithm maintains at each step a reference to the longest repeated suffix of the text. 
Let us say that at the end of the step $k$ 
the longest repeated factor is $T[i \dots k]$. Eventually in some successive steps the longest repeated factor became $T[i+1 \dots k']$.

In order to solve the REP$($LPF$)$ problem in constant time, we let independently to grow each layer of the multilayer suffix tree to $I_{(k'-1)}$, where, for any layer, $k'-1$ is the last step of the construction algorithm which longest repeated factor starts in position $i$. This does not change the time complexity of the algorithm, since the overall steps for a text $T$ are the same.
In the mean time, in order to correctly represent the factors of offset $b_\alpha$ in each layer $\alpha$, the sliding widow size are dynamically set to $SW_\alpha + \mbox{LPF}_\alpha(i)$.

Now that any layer point out to the longest repeated suffix starting in position $i$ accordingly to the layer size, we just take in constant time, along the building process, from any layer one occurrence of its longest repeated suffix. Then, we keep the value coming from the smallest layer with a pointed factor of length equal to the length of the biggest layer, i.e. LPF$(i)$. 

\begin{proposition}
Given a text $T$ of length $n$ in the dynamic setting, at time $i$, the multilayer suffix tree of the text $T[1..i+$LPF$[i]]$ is able to solve REP$($LPF$)$ in constant time.
\end{proposition}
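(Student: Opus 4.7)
The plan is to combine three ingredients already set up in the paper: (i) the online suffix tree for sliding window of Larsson and Senft maintains, at every construction step, a reference to the longest repeated suffix of the currently represented text; (ii) Proposition~\ref{pro:ricerca} says that an occurrence inside the \emph{smallest} layer that still contains a given pattern already solves REP for that pattern; and (iii) the biggest layer $S_{sw_s}$ indexes the whole prefix read so far, so its longest repeated suffix starting at position~$i$ has length exactly $\mbox{LPF}[i]$.

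First, for each layer $S_\alpha$ I would introduce $\mbox{LPF}_\alpha(i)$, the length of the longest factor starting at position~$i$ that has an earlier occurrence inside the window $T[i-\alpha\,..\,i-1]$. By letting each layer independently grow with the characters $a_{i+1}, a_{i+2}, \ldots$ until its active point ceases to sit on position~$i$, the construction yields both $\mbox{LPF}_\alpha(i)$ and one occurrence of the corresponding factor inside $S_\alpha$, in amortized $O(1)$ per appended character. When the biggest layer has done so, we also know $\mbox{LPF}[i] = \mbox{LPF}_{sw_s}(i)$, and this determines when the scan for position~$i$ is finished: precisely after reading $a_1 \cdots a_{i+\mbox{LPF}[i]}$, matching the hypothesis of the proposition.

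Next, I would define $\alpha^\ast = \min\{\alpha \in SW : \mbox{LPF}_\alpha(i) = \mbox{LPF}[i]\}$. Proposition~\ref{pro:ricercabinaria} gives monotonicity of $\mbox{LPF}_\alpha(i)$ in~$\alpha$, so $\alpha^\ast$ is exactly the smallest layer still holding an occurrence of the full LPF. By Proposition~\ref{pro:ricerca}, any occurrence of the LPF that lies inside $S_{\alpha^\ast}$ has offset in the same Bitlen equivalence class as the rightmost occurrence, so reporting it solves REP(LPF). To make the query $O(1)$, I would maintain, along the construction, a cached pointer to $\alpha^\ast$ together with one occurrence of its longest repeated suffix; updating this pointer costs one pass over the $O(\log n)$ layers per appended character, and is absorbed by the $O(n \log n)$ overall building time without affecting the query.

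The main obstacle I foresee is reconciling the asynchronous growth of the different layers with the invariants of the underlying sliding-window suffix trees: each layer advances its own window independently, potentially reading $\mbox{LPF}_\alpha(i)$ extra symbols at step~$i$, and I need to argue both that this look-ahead is compatible with the Larsson--Senft construction and that, by Property~\ref{propertyexp}, the extra work still sums to $O(n \log n)$ over the whole text so that no amortized cost is lost. Once this accounting is in place, the constant-time query reduces to reading the cached $\alpha^\ast$ pointer, which is exactly what the proposition asserts.
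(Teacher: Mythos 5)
Your proposal is correct and follows essentially the same route as the paper: grow each layer asynchronously until its active point leaves position $i$, record each layer's longest repeated suffix starting at $i$, select the smallest layer whose recorded length equals that of the largest layer, and justify correctness via Propositions~\ref{pro:ricercabinaria} and~\ref{pro:ricerca}, with the layer scan charged to construction so the query itself is $O(1)$. The only detail the paper adds that you treat as an open obstacle is the explicit fix for the window invariant --- each layer's window is temporarily enlarged to $sw_\alpha + \mbox{LPF}_\alpha(i)$ during the look-ahead --- but this matches the reconciliation you anticipate.
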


Since LPF$(i)$ is equal to the longest repeated suffix of $T[1..i+$LPF$[i]]$ by the very definition of LPF, the proof of above proposition come straightforward from Proposition \ref{pro:ricercabinaria}.

Concerning the occurrences of the set of prefixes of LPF (SPF for short), we are interested to those having a different bit length of their rightmost position, as the missing ones are just prefix of an occurrence of a longer one.
With arguments similar to the ones used for the LPF case, we maintain inside the multilayer suffix tree a list SPF of $($length, offset$)$ pairs in the following way.
At any time $i$, let $occ$ be the position of LPF$(i)$ found in the largest layer. The pair $($LPF$[i],occ)$ is added to the empty list SPF.
Let LPF$_\alpha$ and $occ_\alpha$ be the length and the offset of the longest repeated suffix pointed out by the layer $\alpha$.
For any $\alpha$ in decreasing order if the last pair in the list SPF has length value equal to LPF$_\alpha$, then this element is updated with the offset value $occ_\alpha$. Otherwise the pair $($LPF$_\alpha$, $occ_\alpha)$ is appended to the list. Notice that the size of the list is smaller that or equal to the number of layers $O(\log i)$ as already assumed. In order to solve the REP(SPF) problem it suffice to retrieve a pointer to the internal SPF list of the multilayer suffix tree. 

The proof of the following proposition is straightforward.

\begin{proposition}
Given a text $T$ of length $n$ in the dynamic setting, at time $i$, the multilayer suffix tree of the text $T[1..i+$LPF$[i]]$ is able to solve REP$($SPF$)$ in constant time.
\end{proposition}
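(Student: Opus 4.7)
The plan is to reduce the query cost to that of returning a handle to a data structure that the multilayer suffix tree already maintains as a by-product of its layered construction. Specifically, I would argue that at every time step $i$, the multilayer suffix tree holds, in an internal linked list, exactly one representative pair $(\ell, occ)$ for each equivalence class of the \emph{having equal codeword bit-length} relation that is actually realized by the rightmost occurrence of some prefix of $\mbox{LPF}(i)$. Once this invariant is in place, answering REP(SPF) consists of returning a pointer to the head of that list, which is trivially $O(1)$.

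First, I would establish correctness of the list content. Fix the prefix $p$ of $\mbox{LPF}(i)$, and let $\alpha^\star\in SW$ be the smallest layer size in which $p$ occurs. By Proposition~\ref{pro:ricerca}, any occurrence of $p$ in $S_{\alpha^\star}$ is a valid $\mbox{REP}(p)$, and the corresponding bit-length class is $[\beta^\star]$ with $\beta^\star = \alpha^\star$. Because $p$ is a prefix of $\mbox{LPF}(i)$ and $\mbox{LPF}_\alpha$ is, by construction, the longest repeated suffix represented at layer $\alpha$, monotonicity (Proposition~\ref{pro:ricercabinaria}) implies $|p| \leq \mbox{LPF}_{\alpha^\star}$ and that $\mbox{LPF}_{\alpha^\star}$ is itself a prefix of $\mbox{LPF}(i)$ with the same equivalence class for its rightmost occurrence. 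Hence storing the single pair $(\mbox{LPF}_{\alpha^\star}, occ_{\alpha^\star})$ per layer captures a representative for every class that appears among the rightmost positions of prefixes of $\mbox{LPF}(i)$.

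Next I would justify the append-or-update rule. Scanning layers in decreasing order of size, whenever $\mbox{LPF}_\alpha = \mbox{LPF}_{\alpha+1}$, the longest repeated suffix represented at layer $\alpha$ coincides with that of the larger layer, so the smaller layer offers a strictly better (smaller) offset for the same word and should overwrite the previously stored pair; conversely, if $\mbox{LPF}_\alpha < \mbox{LPF}_{\alpha+1}$, then by Property~\ref{propertyexp} and the contiguity of the equivalence classes on $[1..|T|]$, every length strictly between $\mbox{LPF}_\alpha$ and $\mbox{LPF}_{\alpha+1}$ corresponds to a prefix whose rightmost occurrence still falls in the window of size $sw_{\alpha+1}$, hence in the class already recorded for the larger layer. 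This shows that the list obtained by processing the $O(\log i)$ layers contains exactly one entry per realized class and no redundancy, and that its size is bounded by $|SW| = O(\log i)$.

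Finally, the cost analysis is immediate: each time $i$ triggers at most one constant-time update per layer, which is absorbed by the $O(n\log n)$ amortized building time, while the query itself returns a pointer to the head of the list in $O(1)$. The main subtlety — and the only step I would treat with care — is the append-vs.-update argument above, because one must rule out the possibility of an equivalence class being ``skipped'' when $\mbox{LPF}_\alpha$ jumps by more than one between consecutive layers; this is precisely where the contiguity of bit-length classes granted by the monotonicity of \emph{Bitlen} together with Proposition~\ref{pro:ricercabinaria} closes the argument.
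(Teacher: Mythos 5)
Your proposal takes essentially the same route as the paper: you maintain, as a by-product of the layered construction, the internal SPF list of $(\mbox{length}, \mbox{offset})$ pairs obtained by scanning the layers' longest repeated suffixes in decreasing order with the update-or-append rule, and you answer REP$($SPF$)$ by returning a pointer to that list in $O(1)$. The paper declares this ``straightforward'' and leaves the correctness details implicit; your argument merely spells them out (one representative per realized bit-length class, no class skipped between consecutive layers), in line with the paper's intended proof.
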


\smallskip

At any time $i$, since any REP$($SPF$)$ value has \emph{equal-cost} w.r.t. the rightmost occurrence of the corresponding LPF prefix, REP$($SPF$)$ can be used  as dictionary pointers for the dictionary phrases matching the factors of the text starting at position $i$ in any  LZ77 based compression algorithm.
Due to space constrain we omit the details of the next proposition. 
\begin{proposition}
Given a text $T$, using at any time $i$ the REP$($SPF$)$ values provided by the multilayer suffix tree, one can correctly build the parsing graph $G'_{A,T}$ of the \emph{Dictionary-Symbolwise Flexible Parsing} algorithm.
\end{proposition}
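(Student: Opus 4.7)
The plan is to reduce the construction of $G'_{A,T}$, as defined in the Dictionary-Symbolwise Flexible Parsing algorithm of \cite{cglmr_JDA2011}, to the information already exposed by the multilayer suffix tree at each position. Recall that in $G'_{A,T}$ the vertex set is the set of text positions, dictionary edges $(i,i+\ell)$ correspond to dictionary phrases of length $\ell$ starting at $i$, and the weight of such an edge is the bit length of the encoding of the pair $(\ell,\text{offset})$. Because \emph{Bitlen} depends only on the numeric value, two dictionary edges with the same length and with offsets belonging to the same class of the \emph{having equal codeword bit-length} relation carry the same weight. Hence, for the purposes of any shortest-path computation on $G'_{A,T}$, it suffices to retain, for each pair $(i,\ell)$ with $1\le \ell\le \text{LPF}(i)$, one representative edge whose offset lies in the class $[\text{rightmost}_i(T[i\,..\,i+\ell-1])]$, i.e., any REP of the $\ell$-prefix of the LPF at $i$.

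First I would establish the single-representative reduction formally: replacing every dictionary edge by a REP-based edge yields a graph with the same per-pair minimum weight, so every shortest path in $G'_{A,T}$ is preserved in cost. Next I would reconstruct such a representative for every $\ell\in[1,\text{LPF}(i)]$ from the SPF list produced, at each position $i$, by the multilayer suffix tree. That list is a sequence $(\ell_1,o_1),(\ell_2,o_2),\ldots,(\ell_k,o_k)$ with $\ell_1=\text{LPF}(i)>\ell_2>\cdots>\ell_k\ge 1$, where, by construction, $o_j$ is an occurrence of the longest repeated suffix in the smallest layer $\alpha_j$ that contains a factor of length exactly $\ell_j$, and successive entries correspond to strictly increasing Bitlen-classes. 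By Proposition~\ref{pro:ricerca}, each $o_j$ is an REP of $T[i\,..\,i+\ell_j-1]$.

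The central step, the "filling", is to cover lengths strictly between $\ell_{j+1}$ and $\ell_j$. The argument is twofold: on the one hand, the occurrence at offset $o_j$ of the factor of length $\ell_j$ provides, for free, an occurrence at the same offset $o_j$ of every one of its prefixes of length $\ell\in(\ell_{j+1},\ell_j]$, hence such prefixes all appear in layer $\alpha_j$; on the other hand, none of them appears in any smaller layer, for otherwise, by the construction of the SPF list, an additional pair with length in $(\ell_{j+1},\ell_j]$ would have been appended. By Proposition~\ref{pro:ricercabinaria} this pins down the Bitlen-class of the rightmost occurrence, so $o_j$ is again an REP for every such prefix. This yields one valid representative edge for every $(i,\ell)$ in $O(\log i)$ amortized time per position, matching the size of the SPF list. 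The symbolwise edges of $G'_{A,T}$ are independent of REP information and can be added in constant time per position using the fixed per-symbol encoding.

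The main obstacle I foresee is making the filling argument fully rigorous: one has to verify that the layers visited by the SPF construction partition $[1,\text{LPF}(i)]$ into exactly the intervals $(\ell_{j+1},\ell_j]$ and that the representative $o_j$ has the same Bitlen as $\text{rightmost}_i(T[i\,..\,i+\ell-1])$ for every $\ell$ in its interval, both of which follow from the monotonicity of \emph{Bitlen} over contiguous offset classes (Property~\ref{propertyexp}) combined with Proposition~\ref{pro:ricercabinaria}. Once this is settled, the equality of edge-weight multisets between the reconstructed graph and $G'_{A,T}$ is immediate, and the correctness of any optimal parsing computed over it follows from the correctness proof in \cite{cglmr_JDA2011}.
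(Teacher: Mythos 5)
Your proposal is correct and follows essentially the route the paper itself sketches but omits ``due to space constrain'': the SPF list keeps, per equal-cost class, the pair coming from the smallest layer, and the intermediate lengths are handled exactly by your filling argument (they are prefixes of the occurrence at offset $o_j$ and cannot lie in a smaller layer, since that layer's longest repeated suffix is only $\ell_{j+1}$), so by Proposition~\ref{pro:ricerca} the recorded offsets have the same \emph{Bitlen} as the rightmost occurrences and the edge weights of $G'_{A,T}$ are reproduced exactly. One small slip: the contiguity of the equal-cost offset classes that your filling step relies on comes from the assumed monotonicity of \emph{Bitlen}, not from Property~\ref{propertyexp}, which only bounds the growth of the window sizes $sw_i$ (and hence the number of layers).
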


\bigskip

\begin{table}[b!]
\begin{center}
\begin{tabular}{| l | r | r | r |c|}
\hline
Input  \quad  & \quad  Length \quad & \quad  MLST  \quad & \quad  RMST \quad & \quad  $\Delta$ \quad \cr
\hline
bible			&	$4$ MB 	& $8.52\ \mu s$ & $1.77\ \mu s$ &  $4.81$  \cr 	
enwik8		& $100$ MB & $16.15\ \mu s$ & $3.00\ \mu s$ &  $5.38$\cr
dna			& $104$ MB & $8.51\ \mu s$ & $2.19\ \mu s$ &  $3.88$\cr

\hline
\end{tabular}
\end{center}
\vspace{0.20cm}\caption{Execution time table for different input. The bible file belongs to the Large Canterbury Corpus, the enwk8 file contains the first $100$ MB of the English Wikipedia database 
and the dna file belongs to the Repetitive Pizza\&Chili Corpus.
The MLST column report the total time per character of the construction of the Multilayer Suffix Tree with max window size of $4$ MB. 
The RMST column report the building time per character of one single suffix tree with sliding window of $4$ MB.
The $\Delta$ column contains the ratio between MLST and RMST times.
}\label{table:time}
\end{table}

We present some experimental results in order to evaluate the practical requirements in terms of space and time. We used as cost function just the length of the binary representation of numbers and then sliding windows of size equal to the increasing powers of $2$ up to a fixed length \emph{MaxDictionarySize} equal to $2^{24}$. 
The space grows up to $35$ Bytes times \emph{MaxDictionarySize} for regular text file like the bible and the enwik8 ones (see Table \ref{table:time} caption for more details), as our implementation uses $17$ Byte in average per character for any layer. 
We compared the multilayer suffix tree (MLST) running time with the running time of 
a suffix tree with sliding window of size \emph{MaxDictionarySize} maintaining the rightmost occurrence of any internal node (RMST). At any time $i$, the position value of internal nodes in the path from the root to the point of insertion of the new leaf are updated.
The average running time of the MLST is about $5$ times greater than the RMST time, as shown in Table \ref{table:time}. 
Furthermore, since all the layers in the multilayer suffix tree are independent each other, it is easy to speed up the overall time by using a parallel handling of the layers.  
Moreover, since maintaining the rightmost position in RMST runs in $O(n \log n)$ average case and $O(n^2 )$ in the worst case while the MLST run time is $O(n \log n)$ in the  worst case, the use of the MLST is encouraged when the dictionary size become large and/or the text contains highly repetitive factors, e.g. using molecular biology data, as the $\Delta$ value for the dna file suggests.

\bibliographystyle{abbrv}
\bibliography{compression}

\end{document}